\documentclass[conference]{IEEEtran}
\IEEEoverridecommandlockouts
\usepackage{amsmath,amssymb,amsthm}
\usepackage[ruled,vlined,linesnumbered]{algorithm2e}
\usepackage{tikz}
\usetikzlibrary{positioning}
\usepackage[caption=false,font=footnotesize]{subfig}
\usepackage{graphicx}
\usepackage{placeins}
\usepackage{dblfloatfix}
\usepackage{enumitem}
\usepackage{hyperref}
\usepackage{cite}
\usepackage{pbalance}

\newcommand{\ignore}[1]{}

\newtheorem{definition}{Definition}
\newtheorem{property}{Property}
\usepackage{graphicx}
\usepackage{pstricks}
\usepackage{times}

\newcommand{\NP}{\mathrm{NP}}
\newcommand{\APX}{\mathrm{APX}}

\renewcommand{\epsilon}{\varepsilon}
\usepackage{amsmath}
\usepackage{amssymb}
\setcounter{tocdepth}{3}
\usepackage{graphicx}
\usepackage{cite}
\usepackage{verbatim} 
\usepackage{url}
\usepackage{flushend}
\usepackage[english]{babel}
\usepackage{balance}
\newtheorem{theorem}{Theorem}
\newtheorem{lemma}{Lemma}
\newtheorem{corollary}{Corollary}

\tikzset{cell/.style={draw,rounded corners=2pt,inner sep=3pt,minimum width=5.5mm,minimum height=4.2mm,font=\footnotesize},
         strip/.style={draw,rounded corners=2pt,inner sep=2pt,minimum height=4mm,font=\scriptsize},
         rev/.style={red,thick}, bdry/.style={black!60,dashed}}

\begin{document}

\title{Sorting by Strip Swaps is NP-Hard}

\author{
\IEEEauthorblockN{
Swapnoneel Roy
}
\IEEEauthorblockA{\textit{School of Computing}\\
\textit{University of North Florida}\\
Jacksonville, Florida, USA\\
Email: s.roy@unf.edu}
\and
\IEEEauthorblockN{Asai Asaithambi}
\IEEEauthorblockA{\textit{Computer Science Department}\\
\textit{Florida Polytechnic University}\\
Lakeland, Florida, USA \\
Email: aasaithambi@floridapoly.edu}
\and
\IEEEauthorblockN{Debajyoti Mukhopadhyay}
\IEEEauthorblockA{\textit{WIDiCoReL Research Lab}\\
Mumbai - India \\
Orcid ID: 0000-0002-8079-4091\\
debajyoti.mukhopadhyay@gmail.com}
}
\maketitle

\begin{abstract}
We show that \emph{Sorting by Strip Swaps} (SbSS) is NP-hard by a polynomial reduction of \emph{Block Sorting}. The key idea is a local gadget, a \emph{cage}, that replaces every decreasing adjacency $(a_i,a_{i+1})$ by a guarded triple $a_i,m_i,a_{i+1}$ enclosed by guards $L_i,U_i$, so the only decreasing adjacencies are the two inside the cage. Small \emph{hinge} gadgets couple adjacent cages that share an element and enforce that a strip swap that removes exactly two adjacencies corresponds bijectively to a block move that removes exactly one decreasing adjacency in the source permutation. This yields a clean equivalence between exact SbSS schedules and perfect block schedules, establishing NP-hardness.
\end{abstract}

\section{Introduction}

\noindent
Sorting by restricted operations on permutations is a central theme with connections to comparative genomics and data rearrangement. In \emph{Block Sorting} one picks up a maximal increasing block and inserts it elsewhere. Block-sorting is NP-hard~\cite{bein2003block,mahajan2007block,narayanaswamy2015}.

\noindent
In \emph{Sorting by Strip Swaps} (SbSS) the allowed operation is to swap two (maximal) increasing strips. Although approximation algorithms are known, the definitive hardness proof requires care because a naive reduction may allow local fixes to create new decreases elsewhere~\cite{asaithambi2017implementation,huang2016new,roy2007algorithms,roy2008sorting}. We present a simple, schedule-free reduction that avoids this pitfall and proves NP-hardness.

When the genomes of two organisms are compared, it is assumed that the genomes consist of the same set of genes. \ignore{It is the order of the genes in theses genomes that makes them distinguishable.} In order to study the similarities between two genomes, the minimum number of rearrangements or mutations required to transform one genome into the other is a very important metric. In computer science, these genomes are represented as permutations, and the rearrangements are represented as \textit{primitives}. \ignore{or rules which are applied to determine the shortest distance between two genomes (permutations).}  Some well known primitives are reversals~\cite{berman20021}, transpositions~\cite{bafna1998sorting}, and block interchanges~\cite{christie1996sorting, lin2005efficient}. We are interested in transforming an arbitrary starting permutation to a target permutation, which is considered to be the {\em identity} permutation by applying the primitives to \ignore{\textit{arbitrary}} substrings within the starting permutation. \ignore{Hence the process corresponds to sorting a given permutation using the minimum number of steps of a primitive or a combination of primitives. This leads to various combinatorial optimization problems that are of interest in their own right. }

In this paper, we consider applying the primitives to maximal substrings in the starting permutation that are also substrings in the identity permutation, which are called \textit{blocks} \cite{bein2003block, mahajan2007block, bein2005faster, huang2016new, asaithambi2017implementation}, or \textit{strips} \cite{roy2007algorithms, roy2008approximate}. The motivation for defining a block or a strip in this manner is to emphasize that any substring that is already sorted will not be broken. Note, however, that the term block has been used by some researchers to refer to any substring in the starting permutation \cite{christie1996sorting, lin2005efficient, berman20021, bafna1998sorting}. Therefore, in order to avoid confusion, we will use the term \textit{strips} to refer to the already-sorted maximal substrings of the starting permutation.

\section{Sorting By Strip Swaps}
We begin with an example illustrating the ideas we have presented so far so that we may define the problem of sorting-by-strip-swaps more formally. Consider the starting permutation.
\[\ 2\ 5\ 6\ 3\ 7\ 8\ 9\ 4\ 1,\ \]
which consists of the six strips 2, 5 6, 3, 7 8 9, 4, and 1. For ease of presentation and understanding, we will display this strip structure as follows.
\begin{center}
\begin{tabular*}{1.8in}{@{\extracolsep{\fill}}cccccc}
\fbox{2}& \fbox{5\ 6}& \fbox{3}& \fbox{7\ 8\ 9}& \fbox{4}&  \fbox{1}
\end{tabular*}
\end{center}
Now, swapping the strip \fbox{5\ 6} with the strip \fbox{3} in the above permutation results in the permutation with the following strip structure.
\begin{center}
\begin{tabular*}{1.6in}{@{\extracolsep{\fill}}cccc}
\fbox{2\ 3}& \fbox{5\ 6\ 7\ 8\ 9}&  \fbox{4}&  \fbox{1}
\end{tabular*}
\end{center}
Note that the number of strips has been reduced to 4 from 6 after one strip swap.
Next, swapping the strip \fbox{5\ 6\ 7\ 8\ 9} with the strip \fbox{4} in the above permutation yields the permutation with the following strip structure.
\begin{center}
\begin{tabular*}{1.6in}{@{\extracolsep{\fill}}cc}
\fbox{2\ \  3\ \ 4\ \ 5\ \ 6\ \ 7\ \ 8\ \ 9}& \fbox{1}
\end{tabular*}
\end{center}
Note that the second swap has reduced the number of strips from 4 to 2. Finally, swapping the two strips in the above permutation yields the identity permutation (which is just one strip), and we show this as follows.
\begin{center}
\ignore{\begin{tabular*}{1.883in}{@{\extracolsep{\fill}}c}
\fbox{\ 1\, \,  2\, \ 3\, \ 4\ \  5\ \  6\, \, 7\, \, 8\, \, 9\ }\\
\end{tabular*}
}
\begin{tabular}{c}
\fbox{\ 1\, \,  2\, \ 3\, \ 4\ \  5\ \  6\, \, 7\, \, 8\, \, 9\ }\\
\end{tabular}
\end{center}
The sequence of strip swaps is known as a \textit{strip swap schedule}. The schedule in the above example is written as 
\[ \left(\ \fbox{5\ 6} \leftrightarrow \fbox{3}\ \right), 
   \left(\ \fbox{5\ 6\ 7\ 8\ 9} \leftrightarrow \fbox{4}\ \right),\]
 \[  \left(\ \fbox{2\ \  3\ \ 4\ \ 5\ \ 6\ \ 7\ \ 8\ \ 9} \leftrightarrow \fbox{1}\ \right).\]
Observe that we were able to obtain the identity permutation from the starting permutation in three strip-swaps. We are interested in answering the question ``Can we accomplish the above task in fewer than three strip swaps''? In relation to the genomic problem we described earlier, we would like to know if we could accomplish this with fewer strip swaps. Specifically, we are interested in transforming the starting permutation to the identity permutation in the smallest number of strip swaps, a combinatorial optimization problem. 

\newcommand{\SSD}{\textnormal{SSD}}
\newcommand{\SSS}{\textsc{Sorting by Strip Swaps}}

\begin{definition}[Strip Swap Distance] 
Let $\pi$ be a permutation of the $n$ integers from $[n]$. The strip swap distance $\SSD(\pi)$ from $\pi$ to the identity permutation $id_n$ is the minimum integer value $m$ for which there are $m$ strip swaps \ignore{$b_1, b_2, \cdots, b_m$} which when applied sequentially to $\pi$, produce $id_n$.   
\end{definition}
A decision version of this problem may be stated as follows:
\begin{center}
\framebox{
\begin{tabular}{l}
\noindent {\sc Sorting by Strip Swaps}
\\ \noindent {\sc Input:} A permutation $\pi$ and an integer $m$.
\\ \noindent {\sc Question:}  Is $\SSD(\pi) \leq m$?
\end{tabular}
}
\end{center}
The authors of \cite{christie1996sorting} prove that sorting by a minimal number of block interchanges, where a block may be any substring of the permutation, is polynomially solvable and present a $O(n^2)$ algorithm for solving this problem. On the other hand, although the strip-swap primitive may be viewed as a non-trivial variant of the block-interchange primitive \cite{roy2008sorting}, the minimal block-interchange algorithm to polynomially solve the sorting-by-block-interchange problem does not work for \SSS{}. However, approximation algorithms for \SSS{} have been designed in~\cite{roy2007algorithms} and~\cite{roy2008approximate}. Although the $2$-approximation algorithm in~\cite{roy2008approximate} is the best known algorithm to date, the computational hardness of \SSS{} has remained an open question.

Another importance of \SSS{} lies in its similarity with {\sc Block Sorting}. The term block in {\sc Block Sorting} refers to our term strip. \ignore{The {\sc Block Sorting} problem is motivated mainly by its application in optical character recognition~\cite{bein2003block}}. In this problem, we are allowed to pick a strip and place it anywhere in the permutation in each move. {\sc Block Sorting} corresponds to finding the minimum number of such strip moves required to sort $\pi$. {\sc Block Sorting} has been shown to be $\NP$-Hard~\cite{bein2003block}, $\APX$-Hard~\cite{narayanaswamy2015}, and the best known approximation algorithms are 2-approximation algorithms~\cite{mahajan2007block, bein2005faster}. \ignore{We note that a polynomial time algorithm for \SSS{} would imply a $2$-approximation algorithm for {\sc Block Sorting}, because a single strip swap move can be mimicked by $2$ strip moves. Hence the study of \SSS{} might lead us to the design of simpler approximation algorithms for {\sc Block Sorting}, a problem which has been open for a while now. Additionally, it is known that a block sorting move (or a strip move) reduces the number of strips by at most $3$ at each step. In contrast, a strip swap reduces the number of strips at most by $4$ at each step.  Since the maximum number of strips that can reduced by a strip swap is {\em more} than that of a block sorting move, algorithms for \SSS{} are expected to perform {\em better} than those for {\sc Block Sorting}. } A little insight into the problem of {\sc Block Sorting} will reveal the fact that a block sorting move in fact also {\em interchanges} the positions of {\em two} sub-strings in $\pi$: one is the block being moved, and the other is its adjacent substring, which might or might not be a block (strip). Hence the cost of a block move operation is essentially {\em equal} to the cost of a strip swap. Hence it is really interesting to know about the computational hardness of \SSS.

\noindent
\paragraph*{Notation.} We write permutations in one-line notation $\pi=a_1\ldots a_n$. A \emph{reversal boundary} is an index $i$ with $a_i>a_{i+1}$. Let $\mathrm{rev}(\pi)$ be the number of reversal boundaries. Basic lower bounds: for Block Sorting, $bs(\pi)\ge \mathrm{rev}(\pi)$; for SbSS, any swap fixes at most two boundaries, so $\mathrm{SSD}(\pi)\ge \lceil \mathrm{rev}(\pi)/2\rceil$.
\section{Overview of Results and Techniques}\label{over}
We prove the $\NP$-Hardness of \SSS{} by reducing $3SAT$ to it via {\sc Block Sorting}, which is the problem of sorting a given permutation using the minimum number of strip moves, where a {\em strip move} displaces a  strip to a different position. In order to accomplish this, we need the following concepts, results, and observations established previously by other researchers.

\newcommand{\rev}{\textnormal{rev}}

\noindent Let $\pi_x$ denote the position of the strip $x$ in the permutation $\pi$.

\begin{definition}[Reversal] 
In a permutation $\pi$, a reversal is a pair of consecutive elements $a$ and $b$ such that $a>b$. Formally $a$ and $b$ form a reversal in $\pi$ if $a>b$ and $\pi_b = \pi_a+1$.
\end{definition}
Let the number of reversals in $\pi$ be $\rev(\pi)$. In~\cite{bein2003block}, a block sorting sequence of length $\rev(\pi)$ has been shown to be optimal, since the block sorting distance $bs(\pi) \geq \rev(\pi)$. 
\begin{center}
    \begin{minipage}{0.5\textwidth}
    We show that $\SSD(\pi) \geq \rev(\pi)/2$. 
    \end{minipage}
\end{center} 
Hence, a strip-swap sequence of length $\rev(\pi)/2$ is optimal for $\pi$. Furthermore, note that since $\rev(\pi)/2$ is only a lower bound for \SSS, there might exist a permutation $\pi$ such that $\SSD(\pi)>\rev(\pi)/2$.

\noindent
Given an arbitrary permutation $\pi$, we construct another permutation $\pi^{\prime}$ from $\pi$ such that $bs(\pi)=\rev(\pi)$ if and only if $\SSD(\pi^{\prime})=\rev(\pi^{\prime})/2$.

\noindent
In~\cite{bein2003block} the authors had constructed the permutation $\pi$ from an arbitrary $3SAT$ formula $\Phi$ such that $\Phi$ is satisfiable if and only if $bs(\pi)=\rev(\pi)$. Then, in conjunction with what we plan to show about the permutation $\pi^{\prime}$ constructed from $\pi$, we would have the result.
\begin{center}
    \begin{minipage}{0.8\textwidth}
 $\Phi$ is satisfiable if and only if $\SSD(\pi^{\prime})=\rev(\pi^{\prime})/2$.
    \end{minipage}
\end{center} 
Since we will show that $\SSD(\pi) \geq \rev(\pi^{\prime})/2$, we also have the result 
\begin{center}
    \begin{minipage}{0.8\textwidth}
 $\Phi$ is satisfiable if and only if $\SSD(\pi^{\prime}) \leq \rev(\pi^{\prime})/2$.
    \end{minipage}
\end{center} 
The above proves that \SSS{} is $\NP$-Hard, which is our main result. In summary, if $f$ is the polynomial-time algorithm described in~\cite{bein2003block} to construct $\pi$ from a given $\Phi$, and $g$ is the polynomial-time algorithm we design to construct $\pi^{\prime}$ from $\pi$, then we have the following:
\noindent
\begin{center}
    \begin{align*}
    3SAT &\overset{f}{\Longrightarrow} \textsc{Block Sorting} \overset{g}{\Longrightarrow} \SSS,\\
    3SAT &\overset{f \circ g}{\Longrightarrow} \SSS,\\
    \noalign{and}
    3SAT &\leq_p \SSS.  
    \end{align*}
\end{center}
The reduction outlined above proves that \SSS{} is $\NP$ hard. \ignore{In Section~\ref{lb}, we present various lower bounds for \SSS. Section~\ref{redblue} discusses the construction of the red-blue graph of a permutation from a given block-sorting schedule. The red-blue graph would be instrumental in our reduction. In Section~\ref{construct} we present the construction of $\pi^{\prime}$, and obtain a strip-swap schedule $\mathcal{S}^{\prime}$ on $\pi^{\prime}$, given $\pi$ and a block sorting schedule $\mathcal{S}$ on $\pi$. Finally, in Section~\ref{npc} we establish the result that \SSS{} is $\NP$-Hard.}

\section{Lower Bounds for \SSS}\label{lb}
It is easy to see that $\SSS\in \NP$. 
\begin{property}\label{lb1}
In a single strip swap, the number of strips that could be reduced is at most $4$. Suppose that the number of strips in a permutation $\pi$ is denoted $\#strips$. Since $id$ contains $1$ strip, we have
\[ \SSD(\pi) \geq \left \lceil(\#strips-1)/4\right \rceil.\]
\ignore{
\[ \SSD(\pi) \geq \left \lceil\frac{\#strips-1}{4} \right \rceil.\]
}
\end{property}
\noindent
The following lemma provides another lower bound for \SSD.
\begin{lemma}\label{lb2}
$\SSD(\pi)\geq\left \lceil \rev(\pi)/2\right \rceil$, where  $\rev(\pi)$ is the number of reversals in $\pi$. 
\end{lemma}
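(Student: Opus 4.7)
My plan is to prove the stronger statement that any single strip swap reduces $\rev(\pi)$ by at most two; the lemma then follows by applying this bound to each move of an optimal strip-swap sequence.

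Fix a swap exchanging strips $S_1$ and $S_2$, with first/last values $a_k\le b_k$ (strips are increasing) and immediate outside neighbors $L_k,R_k$ (when they exist). Only the boundaries at $S_1$ and $S_2$ can change descent status, so at most four position-pairs are affected. In the generic (non-adjacent) case these read $(L_1,a_1),(b_1,R_1),(L_2,a_2),(b_2,R_2)$ before the swap and $(L_1,a_2),(b_2,R_1),(L_2,a_1),(b_1,R_2)$ after. The key step is a pairing argument: the two \emph{left} boundaries share the inner values $\{a_1,a_2\}$ (which interchange), and similarly for the two \emph{right} boundaries with $\{b_1,b_2\}$. I would show that in each pair at most one boundary can flip from a descent to a non-descent. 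For the left pair, if both flipped then simultaneously $L_1>a_1$, $L_1\le a_2$, $L_2>a_2$, and $L_2\le a_1$, yielding the contradictory cycle $L_1>a_1\ge L_2>a_2\ge L_1$. The right-pair argument is identical after replacing $L_k,a_k$ by $b_k,R_k$. Summing over the two pairs bounds the reduction by $2$.

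The remaining sub-case is when $S_1$ is immediately followed by $S_2$ (so $R_1=a_2$ and $L_2=b_1$), collapsing the analysis to three affected boundaries $(L_1,a_1),(b_1,a_2),(b_2,R_2)\to(L_1,a_2),(b_2,a_1),(b_1,R_2)$. A short enumeration using $a_k\le b_k$ rules out three simultaneous descent-to-non-descent flips: if the leftmost and middle both flipped one would need $a_1<a_2\le b_2\le a_1$; if the middle and rightmost both flipped, $b_2\le a_1\le b_1<b_2$; and if the leftmost and rightmost flip, the middle contribution is forced to be non-negative. So again the reduction is at most $2$. Endpoint cases where some $L_k$ or $R_k$ does not exist only remove affected boundaries and so can only tighten the bound. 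I expect the main bookkeeping obstacle to be this adjacent sub-case; once it is settled, iterating the bound over any strip-swap sorting sequence for $\pi$ yields $\SSD(\pi)\ge\lceil\rev(\pi)/2\rceil$.
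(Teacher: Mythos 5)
Your proof is correct and takes essentially the same route as the paper: bound the decrease of $\mathrm{rev}(\pi)$ per strip swap by $2$, then divide by the total $\mathrm{rev}(\pi)$ to be removed. The only difference is that the paper merely asserts the key observation (``a single strip swap can reduce $\mathrm{rev}(\pi)$ at most by $2$'') while you actually justify it; your pairing argument for the two left and two right boundaries, the separate treatment of the adjacent-strips case, and the endpoint remark all check out.
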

  
\proof
We observe that a single strip swap can reduce $\rev(\pi)$ at most by $2$. Since $id$ does not contain any reversals, the goal of \SSS{} is to reduce $\rev(\pi)$ to $0$. Hence, it would require at least $\left \lceil \rev(\pi)/2 \right \rceil$ strip swaps to sort $\pi$.
\qed
\section{The Reduction}

\noindent
Let $\pi=a_1\ldots a_n$ and set $R=\mathrm{rev}(\pi)$. We build a permutation $\pi^\dagger$ and use the threshold $R$ for SbSS.

\subsection{Cage gadget (local isolation)}

\noindent
For each decreasing adjacency $(a_i,a_{i+1})$ output the block
\[
L_i\; a_i\; m_i\; a_{i+1}\; U_i
\]
with strict order constraints
$L_i < a_{i+1} < m_i < a_i < U_i$.
This creates exactly two internal decreases $(a_i,m_i)$ and $(m_i,a_{i+1})$ and guarantees that all adjacencies that cross the cage boundary are increasing (Fig.~\ref{fig:cage}).

\subsection{Hinge gadget (coupling shared elements)}

\noindent
If an element $a_j$ participates in two consecutive decreasing adjacencies $\big(a_{j-1},a_j\big)$ and $\big(a_j,a_{j+1}\big)$, we insert a pair $h_j^L,h_j^R$ positioned between the two cages so that:
\begin{itemize}[leftmargin=*,noitemsep]
  \item any strip swap that resolves the left cage in isolation leaves exactly one \emph{hinge penalty} (a single external decrease) unless the right cage is in the matching configuration, and
  \item symmetrically for resolving the right cage alone.
\end{itemize}

\noindent
The inequalities to realize this are straightforward (guards of the two cages bound a window in which $h_j^L,h_j^R$ sit strictly increasing to the outside); we omit them here for space (Fig.~\ref{fig:hinge}).

\subsection{Relabeling}

\noindent
Output increasing adjacencies of $\pi$ unchanged, concatenate all pieces, and relabel to a permutation on $\{1,\ldots,|\pi^\dagger|\}$ that respects the stipulated inequalities. The result has exactly $2R$ decreases (two per cage) and none elsewhere.

\begin{figure}[!t]
  \centering
  \resizebox{\columnwidth}{!}{
  \begin{tikzpicture}[>=stealth]
    \node[cell,fill=black!5] (L)  at (0,0) {$L_i$};
    \node[cell,fill=black!5,right=0.06 of L] (ai) {$a_i$};
    \node[cell,fill=yellow!20,right=0.06 of ai] (m) {$m_i$};
    \node[cell,fill=black!5,right=0.06 of m] (aip) {$a_{i+1}$};
    \node[cell,fill=black!5,right=0.06 of aip] (U) {$U_i$};
    \draw[rev,->] (ai.south) .. controls +(0,-0.7) and +(0,-0.7) .. (m.south) node[midway,below=6pt,sloped]{\scriptsize dec};
    \draw[rev,->] (m.south) .. controls +(0,-0.7) and +(0,-0.7) .. (aip.south) node[midway,below=6pt,sloped]{\scriptsize dec};
  \end{tikzpicture}
  }
  \caption{\textbf{Cage gadget} for a reversal boundary $(a_i,a_{i+1})$: only internal decreases exist.}
  \label{fig:cage}
\end{figure}
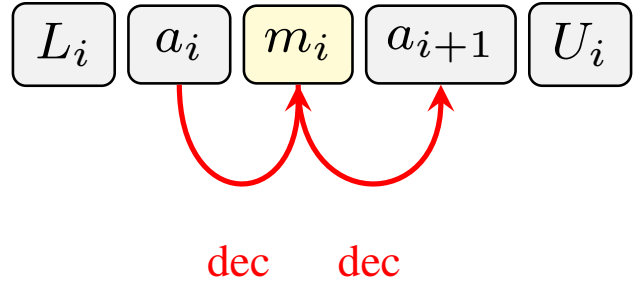

\begin{figure}[!t]
  \centering
  \resizebox{\columnwidth}{!}{
  \begin{tikzpicture}[>=stealth]
    \node[cell,fill=black!5] (L1) at (0,0) {$L_{j-1}$};
    \node[cell,fill=black!5,right=0.06 of L1] (aL) {$a_{j-1}$};
    \node[cell,fill=yellow!20,right=0.06 of aL] (mL) {$m_{j-1}$};
    \node[cell,fill=black!5,right=0.06 of mL] (aJ) {$a_j$};
    \node[cell,fill=black!5,right=0.06 of aJ] (U1) {$U_{j-1}$};
    \node[cell,fill=blue!10,right=0.12 of U1] (hL) {$h_j^L$};
    \node[cell,fill=black!5,right=0.12 of hL] (L2) {$L_{j}$};
    \node[cell,fill=black!5,right=0.06 of L2] (aJ2) {$a_j$};
    \node[cell,fill=yellow!20,right=0.06 of aJ2] (mR) {$m_{j}$};
    \node[cell,fill=black!5,right=0.06 of mR] (aR) {$a_{j+1}$};
    \node[cell,fill=black!5,right=0.06 of aR] (U2) {$U_{j}$};
    \node[cell,fill=blue!10,right=0.12 of U2] (hR) {$h_j^R$};

    \draw[rev,->] (aL.south) .. controls +(0,-0.7) and +(0,-0.7) .. (mL.south);
    \draw[rev,->] (mL.south) .. controls +(0,-0.7) and +(0,-0.7) .. (aJ.south);
    \draw[rev,->] (aJ2.south) .. controls +(0,-0.7) and +(0,-0.7) .. (mR.south);
    \draw[rev,->] (mR.south) .. controls +(0,-0.7) and +(0,-0.7) .. (aR.south);

    \node[below=10mm of aJ2,align=center,font=\scriptsize,draw=none] (note) {Hinge ensures: resolving only one cage\newline leaves a single external decrease};
  \end{tikzpicture}
  }
  \caption{\textbf{Hinge gadget}: two cages sharing $a_j$ coupled by $h_j^L,h_j^R$.}
  \label{fig:hinge}
\end{figure}
\section{Correctness}

\noindent
Let $\rho$ be the projection that deletes all guards and hinge tokens and contracts each cage to the boundary $(a_i,a_{i+1})$, producing a permutation over $\{a_1,\dots,a_n\}$.

\begin{lemma}[Forward (existence)]
If $bs(\pi)=R$, then $\mathrm{SSD}(\pi^\dagger)=R$.
\end{lemma}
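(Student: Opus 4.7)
The statement is an equality, so I treat the two directions separately. The lower bound $\mathrm{SSD}(\pi^\dagger)\ge R$ is essentially free: by construction the cage relabeling produces exactly $2R$ reversal boundaries (two per cage, none across cage walls or across hinge tokens), so Lemma~\ref{lb2} gives $\mathrm{SSD}(\pi^\dagger)\ge \lceil \mathrm{rev}(\pi^\dagger)/2\rceil = \lceil 2R/2\rceil = R$. The substantive direction is the upper bound $\mathrm{SSD}(\pi^\dagger)\le R$, which I would establish by an explicit simulation: from any perfect block schedule $\mathcal{S}=(b_1,\dots,b_R)$ witnessing $bs(\pi)=R$, I build a strip-swap schedule $\mathcal{S}^\dagger=(s_1,\dots,s_R)$ of the same length on $\pi^\dagger$, where each $s_t$ removes exactly $2$ reversal boundaries (the maximum possible per strip swap).

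The key construction step is a \emph{local simulation lemma}: if block move $b_t$ in $\pi$ removes the single reversal $(a_i,a_{i+1})$, then the corresponding strip swap $s_t$ in the current $\pi^\dagger$-state swaps the two increasing strips that flank the triple $a_i,m_i,a_{i+1}$ inside cage $i$, namely the strip ending at $a_i$ (which, because $L_i<a_{i+1}<m_i<a_i<U_i$, turns out to be the maximal run containing $L_i\,a_i$) and the strip beginning at $a_{i+1}$ (the run containing $a_{i+1}\,U_i$), with $m_i$ relocated in the same swap so that the cage linearizes to $L_i\,a_{i+1}\,m_i\,a_i\,U_i$ reading $L_i\,a_i\,a_{i+1}\,m_i\,U_i$ — in whichever orientation the strict order $L_i<a_{i+1}<m_i<a_i<U_i$ yields an entirely increasing run. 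I would check in one figure-level calculation that after the swap, the cage contributes zero internal reversals and the two cage-boundary adjacencies remain increasing, hence $\mathrm{rev}$ decreases by exactly $2$.

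Second, I would handle the coupling through hinges. When $a_j$ participates in two consecutive cages, the projection $\rho$ identifies cage-$(j{-}1)$'s right anchor with cage-$j$'s left anchor. By the defining inequalities of the hinge tokens $h_j^L,h_j^R$ (guards of both cages strictly bound a window in which the hinges sit increasing to the outside), the strip swap $s_t$ for either cage leaves the adjacencies involving $h_j^L,h_j^R$ increasing in exactly the configuration the other cage is in; because $\mathcal{S}$ is a block schedule on $\pi$, the order in which $b_t$ touches the two reversals sharing $a_j$ is well-defined, and I process them in the same order in $\mathcal{S}^\dagger$. An inductive invariant — after processing the first $t$ block moves of $\mathcal{S}$, the $\rho$-image of the current $\pi^\dagger$-state equals the $t$-th state of $\mathcal{S}$, all unresolved cages and hinges are intact, and no external decrease has been created — then carries the simulation to the end and gives $\mathcal{S}^\dagger$ of length exactly $R$.

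The main obstacle is the hinge case. In isolation each cage is trivial to resolve by one strip swap, but when two cages share $a_j$, a swap that resolves only the left cage could in principle create a new external decrease at the boundary with $h_j^L$ before the right cage is touched; the hinge's defining inequalities are designed precisely so that this \emph{does not} happen, but verifying it requires carefully enumerating the few possible orderings of $a_{j-1},a_j,a_{j+1}$ relative to the guards $U_{j-1},L_j$ and the hinge tokens. I expect this case analysis — short but fiddly — to be the bulk of the rigorous write-up; everything else is bookkeeping once the invariant is in place.
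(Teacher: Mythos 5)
Your overall skeleton is the same as the paper's: the lower bound $\mathrm{SSD}(\pi^\dagger)\ge R$ comes from Lemma~\ref{lb2} applied to the $2R$ decreases of $\pi^\dagger$, and the upper bound comes from simulating each block move of a perfect schedule by one $-2$ strip swap localized in the corresponding cage, with the guard inequalities preventing new external decreases. Your inductive invariant and explicit hinge case analysis are in fact more careful than the paper's own (very terse) argument. However, there is a concrete error in your key construction step. The paper defines a strip as a maximal substring of $\pi$ that is also a substring of the identity, i.e.\ a maximal run of \emph{consecutive} integers in increasing order --- not a maximal increasing run. Under that definition, inside the cage $L_i\,a_i\,m_i\,a_{i+1}\,U_i$ the elements $a_i$ and $a_{i+1}$ are singleton strips (the relabeling can avoid accidental consecutiveness with the guards), and the intended move is simply to swap the two singletons $[a_i]$ and $[a_{i+1}]$, producing $L_i\,a_{i+1}\,m_i\,a_i\,U_i$, which is increasing by $L_i<a_{i+1}<m_i<a_i<U_i$.

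You instead identify the strips to be swapped as the run containing $L_i\,a_i$ and the run containing $a_{i+1}\,U_i$, and then need ``$m_i$ relocated in the same swap.'' That is not a legal operation: a strip swap exchanges exactly two strips and cannot additionally move a third element. And without relocating $m_i$, swapping those two runs yields $\ldots a_{i+1}\,U_i\,m_i\,L_i\,a_i\ldots$, which contains the two decreases $U_i>m_i$ and $m_i>L_i$, so $\mathrm{rev}$ does not drop by $2$ and the schedule no longer has length $R$. The fix is just to use the correct notion of strip, after which your argument (and the paper's) goes through; but as written the central simulation step does not describe a valid strip swap.
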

\begin{proof}
A perfect block schedule has $R$ moves, each removing exactly one reversal boundary of $\pi$. For the corresponding cage in $\pi^\dagger$, swap the two singleton strips $[a_i]$ and $[a_{i+1}]$ inside the cage. By the guard inequalities, this eliminates the two internal decreases and cannot create new ones outside. Doing this for all $R$ cages sorts $\pi^\dagger$ in $R$ swaps, which is optimal by the SbSS lower bound.
\end{proof}

\begin{lemma}[Compatibility of $-2$ moves]
Any strip swap in $\pi^\dagger$ that reduces the number of decreasing adjacencies by exactly $2$ acts within a single cage by exchanging $[a_i]$ and $[a_{i+1}]$, and its projection $\rho$ is a block move in $\pi$ that reduces $\mathrm{rev}(\pi)$ by exactly $1$ and creates none elsewhere.
\end{lemma}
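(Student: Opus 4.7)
The plan is a structural case analysis on the two strips $S, T$ that a hypothetical $-2$ swap $\sigma$ exchanges in $\pi^\dagger$. Two governing facts drive the argument: (i) every decreasing adjacency of $\pi^\dagger$ is internal to some cage, so $\sigma$ must erase two cage decreases while creating none anywhere; and (ii) the guard inequalities inside each cage and the hinge inequalities between neighbouring cages make every non-cage-internal adjacency strictly increasing. The easy cases then fall immediately: if both $S$ and $T$ lie outside every cage, $\sigma$ touches no decrease at all and is a $0$ move; if exactly one of $S, T$ meets a cage interior, the external strip's values lie outside the guard window and invert against $L_i$ or $U_i$ once plugged in, giving at most one removed decrease versus at least one created, so $\sigma \ge -1$.

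The real obstacle is the cross-cage case, in which $S$ is interior to cage $i$ and $T$ to a distinct cage $i'$. Each side can at best erase one internal decrease, so a $-2$ balance demands both succeed while the four new seams stay increasing. The hinge gadget is exactly designed to forbid this: when the two cages share an element, its defining property from Section~IV.B charges a single external decrease for any one-sided resolution, so a simultaneous swap across the hinge leaves at least one new decrease at $h_j^L$ or $h_j^R$. When the two cages are merely separated by plain increasing material, the analogous guard check on $U_i$ and $L_{i'}$ against the intermediate values yields the same conclusion. This is the step that uses the hinge gadget essentially, and is where I expect to spend the most care.

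With $\sigma$ now forced inside a single cage $i$, the three maximal internal strips are the strip containing $a_i$, the singleton $m_i$, and the strip containing $a_{i+1}$. Checking the three possible pairings against $L_i<a_{i+1}<m_i<a_i<U_i$, only the exchange of the $a_i$-strip with the $a_{i+1}$-strip simultaneously removes both internal decreases $(a_i,m_i)$ and $(m_i,a_{i+1})$ and leaves the cage in its locally sorted layout $L_i\,a_{i+1}\,m_i\,a_i\,U_i$; the other two pairings either fail to remove two decreases or create new ones across $L_i$ or $U_i$. Hence $\sigma$ is exactly the announced cage exchange.

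Under $\rho$ the cage $i$ contracts to the pair $(a_i, a_{i+1})$ in $\pi$, and the effect of $\sigma$ lifts to displacing the strip of $\pi$ containing $a_{i+1}$ (or, symmetrically, the one containing $a_i$) so as to flip the adjacency $(a_i, a_{i+1})$, which is a legal block move that erases exactly this reversal. The "creates none elsewhere" clause I would handle by contrapositive: any new decrease in $\pi$ at a freshly formed seam would, through the order correspondence built into the cage and hinge inequalities, lift to a decreasing adjacency at a cage boundary in $\pi^\dagger$, contradicting the hypothesis that $\sigma$ is $-2$. Therefore $\rho(\sigma)$ reduces $\rev(\pi)$ by exactly $1$ with no side effects, as required.
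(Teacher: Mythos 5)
Your proof follows essentially the same route as the paper's: all decreases are confined to cage interiors with guards keeping every external adjacency increasing, so a $-2$ swap must resolve a single cage; the hinge property rules out cross-cage resolutions; and the unique within-cage $-2$ exchange projects under $\rho$ to a $-1$ block move. The only difference is that you make explicit the case analysis (both strips external, one external, two distinct cages) that the paper compresses into two sentences, while relying, as the paper does, on the hinge inequalities that are asserted but never written out.
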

\begin{proof}
All decreases lie inside cages; guards keep the outside increasing. A $-2$ change must therefore resolve one cage. Hinges ensure that a $-2$ swap does not leave a hinge penalty, so its effect under $\rho$ is to flip exactly one decreasing adjacency in $\pi$ and no others.
\end{proof}

\begin{lemma}[Projection]
If $\pi^\dagger$ has an exact strip-swap schedule $S$ of length $R$, then $\rho(S)$ is a perfect block-sorting schedule of length $R$ on $\pi$.
\end{lemma}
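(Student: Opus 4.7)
The plan is to combine a tight counting argument with the preceding Compatibility Lemma, applied inductively along the schedule. First I would observe that, by construction, $\pi^\dagger$ has exactly $2R$ decreasing adjacencies: two inside each cage and none elsewhere (guards and hinges keep every cross-boundary adjacency increasing). Since the sorted target has zero decreases and every strip swap eliminates at most two of them (Lemma~\ref{lb2}), any schedule $S$ of length $R$ that sorts $\pi^\dagger$ must be a sequence of $R$ swaps, each of which is a $-2$ move.

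Next I would apply the Compatibility Lemma step by step. For the first swap $s_1 \in S$, the $-2$ property forces $s_1$ to act entirely inside a single cage $C_i$, exchanging $[a_i]$ with $[a_{i+1}]$, and $\rho(s_1)$ is a block move on $\pi$ that removes the reversal boundary $(a_i, a_{i+1})$ and creates no new ones. After $s_1$, the intermediate permutation has $2(R-1)$ decreases, all of them lying in the remaining $R-1$ unresolved cages, with hinges and guards still enforcing the same external invariants (the hinge adjacent to the just-resolved cage becomes part of an increasing run, which only helps). Formally, the intermediate configuration is the $\pi^\dagger$-construction applied to the permutation $\rho(s_1)\cdot \pi$, whose reversal count is $R-1$.

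Iterating this observation, each subsequent $-2$ swap $s_k$ must again resolve a single still-active cage, and $\rho(s_k)$ is a block move that removes one reversal boundary from the current projected permutation without introducing any new one. After all $R$ swaps, every cage is resolved and $\pi^\dagger$ is sorted, so the projection $\rho(S) = (\rho(s_1), \ldots, \rho(s_R))$ is a sequence of $R$ block moves that takes $\pi$ to the identity while reducing $\mathrm{rev}$ by exactly one at each step. This meets the Bein-et-al.\ lower bound $bs(\pi) \ge \mathrm{rev}(\pi) = R$, so $\rho(S)$ is a perfect block-sorting schedule, as required.

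The main obstacle I expect is the induction step: making sure that the post-swap configuration still falls within the scope of the Compatibility Lemma, i.e.\ that resolving one cage cannot damage the guard and hinge invariants of the \emph{other} cages. This follows because the guard inequalities $L_i < a_{i+1} < m_i < a_i < U_i$ are local to each cage and involve only its own tokens, and because the hinge tokens $h_j^L, h_j^R$ are chosen to sit strictly increasingly between the two cages they couple — so a swap confined to one cage's interior touches neither the guards of another cage nor the relative order of any hinge pair. Once this locality is spelled out, the inductive projection goes through cleanly.
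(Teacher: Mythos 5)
Your proof follows the same route as the paper's: count the $2R$ initial decreases, use the fact that each swap removes at most two to force every swap in an exact length-$R$ schedule to be a $-2$ move, invoke the Compatibility Lemma to project each such swap to a $-1$ block move, and conclude with the lower bound $bs(\pi)\ge R=\mathrm{rev}(\pi)$. Your explicit treatment of the induction step --- checking that intermediate configurations still satisfy the cage and hinge invariants so the Compatibility Lemma remains applicable --- is a detail the paper leaves implicit, but it does not change the approach.
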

\begin{proof}
Initially $\pi^\dagger$ has $2R$ decreases. Exactness in $R$ swaps forces every swap to be $-2$. By the previous lemma, each corresponds to a $-1$ block move under $\rho$. After $R$ swaps we have removed $R$ decreases in $\pi$. Hence $bs(\pi)\le R$, and by the lower bound we have equality.
\end{proof}

\begin{theorem}
$\pi$ has a perfect block-sorting schedule of length $R$ if and only if $\pi^\dagger$ has an exact strip-swap schedule of length $R$.
\end{theorem}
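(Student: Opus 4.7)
The plan is to obtain the theorem as an immediate packaging of the three preceding lemmas, since each direction of the biconditional is essentially the content of one of them. Before combining them I would first fix the dictionary: a \emph{perfect} block schedule means $bs(\pi)=R$, which matches the lower bound $bs(\pi)\ge \mathrm{rev}(\pi)=R$; and an \emph{exact} strip-swap schedule of length $R$ means $\mathrm{SSD}(\pi^\dagger)=R$, which matches the SbSS lower bound $\mathrm{SSD}(\pi^\dagger)\ge \lceil \mathrm{rev}(\pi^\dagger)/2\rceil$ from Lemma~\ref{lb2}. The Relabeling step guarantees $\mathrm{rev}(\pi^\dagger)=2R$ (two decreases per cage, none external), so this lower bound is exactly $R$.

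For the forward direction, assuming $bs(\pi)=R$ I would simply invoke the Forward (existence) lemma to produce a strip-swap schedule of length $R$ on $\pi^\dagger$, which is automatically exact because the SbSS lower bound is $R$. For the converse, given an exact schedule $S$ on $\pi^\dagger$ of length $R$, the Projection lemma yields a block-sorting schedule $\rho(S)$ of length $R$ on $\pi$, and combining with $bs(\pi)\ge R$ forces $bs(\pi)=R$, i.e. a perfect schedule.

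There is essentially no obstacle at this final step: the theorem follows by chaining the two lemmas, with the Compatibility of $-2$ moves lemma having already absorbed all the local-gadget reasoning. The only thing that deserves a last sanity check is that the Relabeling step really produces a permutation with exactly $2R$ decreases, with none across cage boundaries and none induced by the hinge pairs $h_j^L, h_j^R$, so that the SbSS lower bound on $\pi^\dagger$ coincides with the block-sorting lower bound on $\pi$; this is precisely the content of the guard and hinge inequalities already stipulated in Section~\ref{over} and used in the Compatibility lemma.
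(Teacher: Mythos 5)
Your proposal is correct and follows exactly the route the paper takes: the theorem is obtained by chaining the Forward (existence) lemma for one direction and the Projection lemma (resting on the Compatibility lemma) for the other, with the lower bounds supplying optimality. The paper's own proof is simply ``immediate from the lemmas above,'' so your more explicit bookkeeping of the $2R$-decrease count and the matching of ``perfect''/``exact'' to the respective lower bounds is the same argument, just spelled out.
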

\begin{proof}
Immediate from the lemmas above.
\end{proof}

\begin{corollary}[NP-Hardness]
\textsc{Sorting by Strip Swaps} is NP-hard.
\end{corollary}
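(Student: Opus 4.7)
The plan is to compose the construction of Section~IV with the NP-hardness of \textsc{Block Sorting}, using the Theorem as the workhorse. First I would invoke Bein et al.~\cite{bein2003block}: there is a polynomial-time map $f$ that sends a $3SAT$ formula $\Phi$ to a permutation $\pi$ with $\Phi$ satisfiable iff $bs(\pi)=\rev(\pi)$. Setting $R=\rev(\pi)$, I would then apply the construction $g:\pi\mapsto\pi^\dagger$ from Section~IV and output the \SSS{} instance $(\pi^\dagger,R)$ of the decision problem.

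Correctness follows in one step from the Theorem combined with Lemma~\ref{lb2}: the latter gives $\SSD(\pi^\dagger)\ge\lceil\rev(\pi^\dagger)/2\rceil=R$, so the decision question ``$\SSD(\pi^\dagger)\le R$'' is equivalent to the existence of an \emph{exact} strip-swap schedule of length $R$, which by the Theorem holds iff $\pi$ admits a perfect block-sorting schedule of length $R$, iff $\Phi$ is satisfiable. The polynomial-time bound is clear: $|\pi^\dagger|=|\pi|+O(R)$ because each cage contributes three new tokens and each hinge two, and both $f$ and $g$ run in polynomial time; thus $f\circ g$ is a valid many-one reduction $3SAT\le_p\SSS$, which together with $\SSS\in\NP$ (noted at the start of Section~IV) yields the Corollary.

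The step I expect to require the most care is verifying that the relabeling in Section~IV-C is simultaneously realizable: the cage inequalities $L_i<a_{i+1}<m_i<a_i<U_i$, the hinge-window inequalities on $h_j^L$ and $h_j^R$, and the preservation of the ascending ranks across increasing adjacencies inherited from $\pi$ must jointly admit a consistent total order on the tokens of $\pi^\dagger$. I would handle this by collecting all imposed strict inequalities into a directed graph on these tokens, verifying acyclicity (cage constraints are local to five tokens, hinge constraints span a single window between two consecutive cages, and preserved increases involve disjoint tokens in separate pieces), and then taking any linear extension as the relabeling. A short case analysis at the seams between cages, hinges, and the unchanged increasing portions of $\pi$ would then confirm that the only decreases in $\pi^\dagger$ are the $2R$ intended internal cage decreases---the property that every lemma in Section~V uses implicitly when counting reversals of $\pi^\dagger$ and invoking the SbSS lower bound.
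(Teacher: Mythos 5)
Your proposal is correct and follows essentially the same route as the paper: compose the Bein et al.\ reduction from $3SAT$ (hard exactly at threshold $\rev(\pi)$) with the cage/hinge construction, and use the Theorem together with the lower bound of Lemma~\ref{lb2} to turn ``$\SSD(\pi^\dagger)\le R$'' into ``exact $R$-swap schedule.'' You make explicit two points the paper's one-line proof leaves implicit --- that the hardness of \textsc{Block Sorting} is specifically at the threshold $R=\rev(\pi)$, and that the relabeling must consistently realize all the gadget inequalities so that $\rev(\pi^\dagger)=2R$ --- both of which are genuine obligations, but the overall argument is the same.
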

\begin{proof}
Reduce from \textsc{Block Sorting} with threshold $R=\mathrm{rev}(\pi)$. The construction is polynomial and preserves YES/NO by the theorem.
\end{proof}

\section{Worked Examples}
\paragraph*{YES instance.} $\pi_Y=4\,1\,3\,2$ has $R=2$ disjoint reversals. The constructed $\pi_Y^\dagger$ contains two independent cages; swapping $[4]\leftrightarrow[1]$ and then $[3]\leftrightarrow[2]$ sorts in $2=R$ swaps.

\begin{figure}[!t]
\centering
\resizebox{\columnwidth}{!}{
\begin{tikzpicture}[>=stealth]
\node[draw=none] at (-0.3,1.1) {\footnotesize Initial $\pi_Y^\dagger$};
\node[cell,fill=black!5] (L1) at (0,0) {$L_1$};
\node[cell,fill=black!5,right=0.06 of L1] (a1) {$4$};
\node[cell,fill=yellow!20,right=0.06 of a1] (m1) {$m_1$};
\node[cell,fill=black!5,right=0.06 of m1] (b1) {$1$};
\node[cell,fill=black!5,right=0.06 of b1] (U1) {$U_1$};
\node[cell,fill=black!5,right=0.18 of U1] (L3) {$L_3$};
\node[cell,fill=black!5,right=0.06 of L3] (a3) {$3$};
\node[cell,fill=yellow!20,right=0.06 of a3] (m3) {$m_3$};
\node[cell,fill=black!5,right=0.06 of m3] (b3) {$2$};
\node[cell,fill=black!5,right=0.06 of b3] (U3) {$U_3$};
\draw[rev,->] (a1.south) .. controls +(0,-0.7) and +(0,-0.7) .. (m1.south);
\draw[rev,->] (m1.south) .. controls +(0,-0.7) and +(0,-0.7) .. (b1.south);
\draw[rev,->] (a3.south) .. controls +(0,-0.7) and +(0,-0.7) .. (m3.south);
\draw[rev,->] (m3.south) .. controls +(0,-0.7) and +(0,-0.7) .. (b3.south);

\draw[->,thick] ([yshift=-0.95cm]a1.south) -- node[midway,fill=white,inner sep=1pt,font=\scriptsize]{swap 1: $[4]\leftrightarrow[1]$} ([yshift=-0.95cm]b1.south);

\node[draw=none] at (-0.7,-1.5) {\footnotesize after swap 1};
\node[cell,fill=black!5] (L1b) at (0,-2.1) {$L_1$};
\node[cell,fill=black!5,right=0.06 of L1b] (b1b) {$1$};
\node[cell,fill=yellow!20,right=0.06 of b1b] (m1b) {$m_1$};
\node[cell,fill=black!5,right=0.06 of m1b] (a1b) {$4$};
\node[cell,fill=black!5,right=0.06 of a1b] (U1b) {$U_1$};
\node[cell,fill=black!5,right=0.18 of U1b] (L3b) {$L_3$};
\node[cell,fill=black!5,right=0.06 of L3b] (a3b) {$3$};
\node[cell,fill=yellow!20,right=0.06 of a3b] (m3b) {$m_3$};
\node[cell,fill=black!5,right=0.06 of m3b] (b3b) {$2$};
\node[cell,fill=black!5,right=0.06 of b3b] (U3b) {$U_3$};
\draw[->,thick] ([yshift=-0.95cm]a3b.south) -- node[midway,fill=white,inner sep=1pt,font=\scriptsize]{swap 2: $[3]\leftrightarrow[2]$} ([yshift=-0.95cm]b3b.south);

\node[draw=none] at (-0.7,-3.6) {\footnotesize after swap 2 (sorted)};
\node[cell,fill=black!5] (L1c) at (0,-4.2) {$L_1$};
\node[cell,fill=black!5,right=0.06 of L1c] (b1c) {$1$};
\node[cell,fill=yellow!20,right=0.06 of b1c] (m1c) {$m_1$};
\node[cell,fill=black!5,right=0.06 of m1c] (a1c) {$4$};
\node[cell,fill=black!5,right=0.06 of a1c] (U1c) {$U_1$};
\node[cell,fill=black!5,right=0.18 of U1c] (L3c) {$L_3$};
\node[cell,fill=black!5,right=0.06 of L3c] (b3c) {$2$};
\node[cell,fill=yellow!20,right=0.06 of b3c] (m3c) {$m_3$};
\node[cell,fill=black!5,right=0.06 of m3c] (a3c) {$3$};
\node[cell,fill=black!5,right=0.06 of a3c] (U3c) {$U_3$};
\end{tikzpicture}}
\caption{YES instance $\pi_Y=4\,1\,3\,2$: two independent cages; two local strip swaps (each $-2$) sort $\pi_Y^\dagger$ in $R=2$ moves.}
\label{fig:yes-example}
\end{figure}
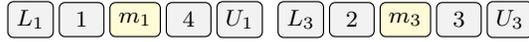

\paragraph*{NO instance.} $\pi_N=7\,2\,6\,5\,8\,3\,1\,4$ has $R=4$ with element $3$ shared by two reversals. In $\pi_N^\dagger$, the hinge between the two cages containing $3$ forces at least one non-$-2$ swap, so any schedule needs $\ge5$ swaps. Thus no exact $R$-swap schedule exists.

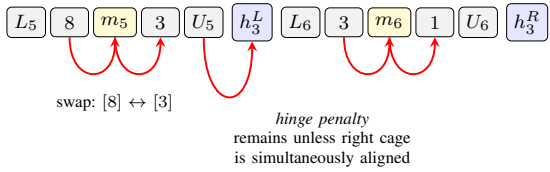
\begin{figure}[!t]
\centering
\resizebox{\columnwidth}{!}{
\begin{tikzpicture}[>=stealth]
\node[draw=none] at (-0.2,1.1) {\footnotesize Initial portion of $\pi_N^\dagger$ around $3$};
\node[cell,fill=black!5] (L5) at (0,0) {$L_5$};
\node[cell,fill=black!5,right=0.06 of L5] (aL) {$8$};
\node[cell,fill=yellow!20,right=0.06 of aL] (mL) {$m_5$};
\node[cell,fill=black!5,right=0.06 of mL] (aJ) {$3$};
\node[cell,fill=black!5,right=0.06 of aJ] (U5) {$U_5$};
\node[cell,fill=blue!10,right=0.12 of U5] (hL) {$h_3^L$};
\node[cell,fill=black!5,right=0.12 of hL] (L6) {$L_6$};
\node[cell,fill=black!5,right=0.06 of L6] (aJ2) {$3$};
\node[cell,fill=yellow!20,right=0.06 of aJ2] (mR) {$m_6$};
\node[cell,fill=black!5,right=0.06 of mR] (aR) {$1$};
\node[cell,fill=black!5,right=0.06 of aR] (U6) {$U_6$};
\node[cell,fill=blue!10,right=0.12 of U6] (hR) {$h_3^R$};

\draw[rev,->] (aL.south) .. controls +(0,-0.7) and +(0,-0.7) .. (mL.south);
\draw[rev,->] (mL.south) .. controls +(0,-0.7) and +(0,-0.7) .. (aJ.south);
\draw[rev,->] (aJ2.south) .. controls +(0,-0.7) and +(0,-0.7) .. (mR.south);
\draw[rev,->] (mR.south) .. controls +(0,-0.7) and +(0,-0.7) .. (aR.south);

\draw[->,thick] ([yshift=-0.95cm]aL.south) -- node[midway,fill=white,inner sep=1pt,font=\scriptsize]{swap: $[8]\leftrightarrow[3]$} ([yshift=-0.95cm]aJ.south);
\node[draw=none,align=center,font=\scriptsize] at (4.3,-1.7) {\textit{hinge penalty}\\remains unless right cage\\is simultaneously aligned};
\draw[rev,->] (U5.south) .. controls +(0,-1.1) and +(0,-1.1) .. (hL.south);
\end{tikzpicture}}
\caption{NO instance $\pi_N=7\,2\,6\,5\,8\,3\,1\,4$: two cages share element $3$ and are coupled by a hinge. Any attempt to resolve one cage in isolation leaves a hinge penalty, so an exact $R$-swap schedule does not exist.}
\label{fig:no-example}
\end{figure}
\section{Conclusion}
We gave a schedule-free, local reduction from Block Sorting to SbSS. Cages isolate reversal boundaries; hinges couple neighbors so that exact strip swaps correspond one-for-one to perfect block moves. Composed with known NP-hardness of Block Sorting, this proves SbSS is NP-hard.

\balance

\bibliographystyle{IEEEtran}
\bibliography{ref}

\begin{thebibliography}{99}
\bibitem{bein2003block} W. W. Bein, L. L. Larmore, S. Latifi, and I. H. Sudborough, "Block sorting is hard," International Journal of Foundations of Computer Science vol. 14 no. 03 pp. 425--437 2003.
\bibitem{mahajan2007block} M. Mahajan, R. Rama, and V. Sundarrajan, "Block Sorting: A Characterization and some Heuristics.," Nord. J. Comput. vol. 14 no. 1-2 pp. 126--150 2007.
\bibitem{narayanaswamy2015} N. Narayanaswamy, and S. Roy, "Block sorting is apx-hard," in International Conference on Algorithms and Complexity pp. 377--389 Springer 2015.
\bibitem{asaithambi2017implementation} A. Asaithambi, S. Roy, and S. Turlapaty, "Implementation and Performance Comparison of Some Heuristic Algorithms for Block Sorting," in 2017 IEEE 17th International Conference on Bioinformatics and Bioengineering (BIBE) pp. 45--50 IEEE 2017.
\bibitem{huang2016new} J. Huang, S. Roy, and A. Asaithambi, "New approximations for block sorting," Network Modeling Analysis in Health Informatics and Bioinformatics vol. 5 no. 1 pp. 6 2016.
\bibitem{roy2007algorithms} S. Roy, A. K. Thakur, A. Pande, and M. Rahman, "Algorithms and Design for an Autonomous Biological System," in Third International Conference on Autonomic and Autonomous Systems (ICAS'07) pp. 35--35 IEEE 2007.
\bibitem{roy2008sorting} S. Roy, M. Rahman, and A. K. Thakur, "Sorting Primitives and Genome Rearrangement in Bioinformatics: A Unified Perspective," World Academy of Science, Engineering and Technology vol. 38 pp. 363--368 2008.
\bibitem{berman20021} P. Berman, S. Hannenhalli, and M. Karpinski, "1.375-Approximation algorithm for sorting by reversals," Lecture notes in computer science pp. 200--210 2002.
\bibitem{bafna1998sorting} V. Bafna, and P. A. Pevzner, "Sorting by transpositions," SIAM Journal on Discrete Mathematics vol. 11 no. 2 pp. 224--240 1998.
\bibitem{christie1996sorting} D. A. Christie, "Sorting permutations by block-interchanges," Information Processing Letters vol. 60 no. 4 pp. 165--169 1996.
\bibitem{lin2005efficient} Y. C. Lin, C. L. Lu, H. Chang, and C. Y. Tang, "An efficient algorithm for sorting by block-interchanges and its application to the evolution of vibrio species," Journal of Computational Biology vol. 12 no. 1 pp. 102--112 2005.
\bibitem{bein2005faster} W. W. Bein, L. L. Larmore, L. Morales, and I. H. Sudborough, "A faster and simpler 2-approximation algorithm for block sorting," in International Symposium on Fundamentals of Computation Theory pp. 115--124 Springer 2005.
\bibitem{roy2008approximate} S. Roy, and A. K. Thakur, "Approximate strip exchanging," International journal of computational biology and drug design vol. 1 no. 1 pp. 88--101 2008.
\end{thebibliography}

\end{document}